\documentclass{article}
\usepackage[utf8]{inputenc}
\usepackage{amsmath,amssymb,amsthm,graphicx}
\usepackage{url}
\usepackage{geometry}

\newcommand{\vektor}[1]{\begin{pmatrix} #1 \end{pmatrix}}
\newcommand{\B}{\mathcal{B}}
\newcommand{\R}{\mathbb{R}}
\newcommand{\pil}{\rightarrow}
\newcommand{\de}{\: \mathrm{d}}
\newcommand{\bt}{\mathbf{T}}

\newtheorem{corr}{Corollary}
\newtheorem{defi}{Definition}
\newtheorem{eks}{Example}
\newtheorem{prop}{Proposition}
\newtheorem{rem}{Remark}

\author{Niels Lundtorp Olsen\footnote{\textit{Department of Applied Mathematics and Computer Science}, {Technical University of Denmark}}, Alessia Pini\footnote{\textit{Department of Statistical Sciences}, {Università Cattolica del Sacro Cuore}}, Simone Vantini\footnote{\textit{Department of Mathematics}, {Politecnico di Milano}}}
\title{Local inference for functional data on manifold domains using permutation tests}


\begin{document}

\maketitle

\begin{abstract}
Pini and Vantini (2017) introduced the interval-wise testing procedure which performs local inference for functional data defined on an interval domain, where the output is an adjusted p-value function that controls for type I errors. 

We extend this idea to a general setting where domain is a Riemannian manifolds. This requires new methodology such as how to define adjustment sets on product manifolds and how to approximate the test statistic when the domain has non-zero curvature.
We propose to use permutation tests for inference and apply the procedure in three settings: a simulation on a "chameleon-shaped" manifold and two applications related to climate change where the manifolds are a complex subset of $S^2$ and $S^2 \times S^1$, respectively. 
We note the tradeoff between type I and type II errors: increasing the adjustment set reduces the type I error but also results in smaller areas of significance. However, some areas still remain significant even at maximal adjustment. 

\end{abstract}

\section{Introduction}
A topic of increasing importance in the functional data analysis (FDA) literature is \emph{local inference}, i.e., for a given null hypothesis testing problem assessing in which parts of the functional domain $D$ a given null hypothesis is true/false. More formally, if we let $H_0^t$ be a null hypothesis defined for each point $t \in D$, then we wish to infer the set $\{t \in D: H_0 \text{ is false/true}\}$. The crucial issue is that of \emph{multiple testing}: we are indeed in the scenario of simultaneously testing a continuous infinity of hypotheses. So, if the hypotheses are evaluated separately, any control over the domain of the family-wise error rate (FWER) would be completely lost thus leading to a useless tool from an applicative point of view. 

A number of methods for local inference for functional data have been proposed in recent literature, e.g.
 \cite{pv2017interval}, \cite{olsen2021}, \cite{winkler2014}.

At the heart of the issue is the trade-off between doing type I errors and the power of the test  (ie. type II errors). 
Most methods focus on limiting the \emph{family-wise error rate} (FWER), either globally (\cite{winkler2014}, \cite{liebl2019}),
 or across an interval \cite{pv2017interval}. 
A popular and less conservative measure is to control the \emph{false discovery rate} (FDR) \cite{BH1995} which was studied in an FDA setting by \cite{olsen2021}. 
Furthermore, we can separate into \emph{global adjustment procedures} and \emph{local adjustment procedures}. Local adjustment procedures make use of 'proximity' in the adjustment procedure (e.g. (only) adjusts for intervals below a given length $L$), whereas global adjustment procedures adjust wrt. the entire functional domain and no proper subsets thereof.
The IWT procedure \cite{pv2017interval} is the only published local adjustment procedure that we know of; the rest are global procedures.
These methods have been compared in \cite{pataky2021}. 

\paragraph{Local inference in higher dimensions}
Most research/methodology in local inference has focused on one-dimensional domains, i.e. intervals. 
Some methods are directly applicable to higher dimensions \cite{winkler2014}, \cite{olsen2021}, whereas others would require some extension of the methodology. We notice that the aforementioned methods all use a global adjustment procedure, pointing to a gap in the existing literature. 

There is marked difference between local adjustment in the one-dimensional and multidimensional cases:
the only shape in 1D is the interval, whereas higher dimensions contains an infinite multitude of shapes. 
Furthermore, the notion of proximity must deal with dimensional inhomogenity when the dimensions/coordinates represent incomparable quantities, for instance space and time.

In this work we extend the IWT methodology to Riemannian manifolds of any dimension. In a similar fashion, we use permutation tests to construct $p$-values on a family of pre-specified 'adjustment sets', which are then adjusted to obtain an adjusted $p$-value function. 
Our proposed method has a large deal of flexibility built in; we include an "upper limit" (potentially $\infty$) on the adjustment and allow for incomparable quantities.  

As with any methodology for functional data, one needs to use certain approximations, and
there is a non-trivial issue of how to take curvature etc. into consideration when approximating the manifold. 
\cite{olsen2021} proposes to map the manifold into some set $T \subseteq \R^k$ and use a corresponding weighted measure. Here we propose to use a triangulation, see Section \ref{implement-sect}. 

The remainder of the paper is organized as follows: 
Section \ref{sect-model} defines the statistical model. Section \ref{hyp-section} defines the inferential procedure, and Section \ref{implement-sect} describes and discusses the considerable details needed for an actual implementation. Sections \ref{sect-sim} and \ref{sect-application} contains simulations and application studies, and finally in Section \ref{sect-discuss} we highlight and discuss important points of this article.

\section{Setting and model} \label{sect-model}
In this section we describe the setting of our statistical model. The setting of our models and data are functional data defined on a \emph{Riemannian manifold}, which is considered fixed and known.

The setting of our models and data are \emph{Riemannian manifolds}, with the intrinsic metric, topology and measure/density derived from this manifold. 

\begin{defi} \label{manif-defi}
A manifold $M$ of finite dimension is a smooth manifold together with a smoothly varying 2-tensor field $g$ on $M$ which is an inner product at each point. %
The inner product $g$ defines a metric $d$ and a measure $\mu$ on $M$, which we will refer to as the \emph{Riemannian metric} and the \emph{Riemannian measure}, respectively. 
\end{defi}

For a given  manifold $M$ with metric $d$, define $B(x, \epsilon)$ as 
$$
B(x,\epsilon) = \{y \in M | d(x,y) < \epsilon\}, \quad x \in M, \epsilon > 0 
$$

For simplicity, we restrict ourselves to non-weighted Riemannian measures, but note the (straightforward) option of introducing a weight into the measure.

\subsection{Statistical model and pointwise hypotheses}
Let $M_1, \dots, M_L$ be  manifolds according to Definition \ref{manif-defi}, and define $M = M_1 \times \dots \times M_L$. The structure of $M$ is not of relevance in the modelling, but will be of interest in the inference. 

We assume that we have observed $N$ smooth functional signals $\xi_1, \dots, \xi_N$: $M \pil \R$ that are generated according to the following functional linear model:
\begin{equation}
\xi_i(s) = \beta_0(s) + \sum_{k = 1}^K \beta_k(s) x_{ki}(s) + \epsilon_i(s), \quad s \in M, \quad i = 1, \dots, N
\end{equation}
Here $x_{i1}, \dots , x_{Ki}$ are known  covariate functions, and $\beta_{1}, \dots , \beta_{K}$ are the unknown functional regression functions. The error signals $\epsilon_1, \dots, \epsilon_N$ are i.i.d. continuous zero-mean random functions. 
All functions are assumed to be continous, which also ensures measurability. 

Model estimation is commonly done using ordinary least squares estimation (OLS). However, in this work we are not interested in the estimation, %

This could in principle be almost any pointwise defined hypothesis, however following \cite{abramowiczmox} we will restrict our to  affine  (possibly composite) hypotheses on the form $C \beta = c_0$ for a fixed matrix $C$; this covers all common hypotheses of the general linear model.

In detail, let $C \in \R^{m \times K}$ be a full-rank matrix and $c_0: M \pil \R^m$, such that this pair defines a pointwise hypothesis for each $s \in M$:
\begin{align}
    H^s_0: & \quad  C\beta(s) = c_0(s)  \\
    H^s_A: & \quad C\beta(s) \neq c_0(s) 
\end{align}

\begin{eks} \label{two-sample-model} 
Consider spatial data defined on a region $M = [0,X] \times [0, Y] \subset \R^2$. The (unpaired) two-sample model can be defined as:
\begin{align*}
\xi_{Ai}(t) =& \beta_A(t) + \epsilon_{Ai}(t), \quad i = 1, \dots, n_1, \quad t \in M  \\
\xi_{Bj}(t) =& \beta_B(t) + \epsilon_{Bj}(t), \quad j = 1, \dots, n_2, \quad t \in M
\end{align*}
Here we estimate the population means $\beta_A$ and $\beta_B$ by pointwise averages, ie. 
\begin{align*}
\hat{\beta}_A(t) = \tfrac{1}{n_1} \sum_{k = 1}^{n_1} \xi_{Ai}(t) \quad, t \in M 
\end{align*}
and similar for $\beta_B$. The usual statistical question to ask is then: \emph{is $\beta_A = \beta_B$?}, and in the case of local inference: \emph{for which $t \in M$ are  $\beta_A$ and $\beta_B$ equal?}
\end{eks}

\section{Methodology: local inference and adjustment procedure} \label{hyp-section}

This section formally introduces the adjustment sets, the testing procedure and the adjustment procdure. Details of implementation and approximation are described in Section \ref{implement-sect}. 

\subsection{Adjustment sets} First we need to define an \emph{adjustment family}, a family of sets that are used to adjust the $p$ value function and control the associated error rate.

In 1D, there is one natural adjustment family: namely all intervals, as used in \cite{pv2017interval}. 
For a general manifold $M$ we use the following approach:
Let  $M = M_1 \times \dots \times M_L$ and let $r_1, \dots r_L > 0$ be corresponding radii, defined \textit{a priori}.  We allow for $r_l = \infty$.

The adjustment family $\B$ consists of all sets $I$ on the following form ("ball products on M"): 
\begin{equation}
I = B(x_1, \epsilon_1) \times \dots \times B(x_1, \epsilon_n), \quad x_i \in M_i, \quad \epsilon_i \leq r_i \text{ for all } i \label{iwt-ball}
\end{equation}

\begin{rem}
	By definition, the Cartesian product of finitely many manifolds is a manifold itself. However, we wish to make the distinction between e.g. the cases $\R \times \R$ and $\R^2$, since the product of two balls in $\R$ is not a ball in $\R^2$.
	
	From a practical/application point of view, $\R^2$ could represent e.g. a 2D spatial domain, whereas $\R \times \R$ would represent two incomparable quantities, e.g. time and space. See also Example \ref{two-sample-model-2}. 
\end{rem}

\begin{rem} The radii used to define $\B$ are an addition to the previous methodology of \cite{pv2017interval} which only considered the case $r = \infty$.
	Besides allowing the researcher more flexibility in terms of inference, the choice of $r$ also illustrates the link between type I and II errors: decreasing $r_i$ increases the power of the testing procedure since we need to correct for fewer sets. We illustrate this trade-off between type I and II errors in simulation and applications. 
\end{rem}

\subsection{Ball-wise hypotheses, inference and $p$-values}

Following the notation introduced in the previous section, we define the ball(-wise) (null) hypothesis and alternative hypothesis for all $I \in \B$ as:

\begin{align}
    H^I_0: & \quad  C\beta(s) = c_0(s) \quad \forall s \in I \\
    H^I_A: & \quad C\beta(s) \neq c_0(s) \textbf{ for some } s \in I
\end{align}
noting that $H_0^I$ is { true} if and only if the pointwise null hypothesis $H^s_0$ is{ true} for all $s \in I$.

\begin{eks}[Continuation of example \ref{two-sample-model}] \label{two-sample-model-2}
The hypothesis of  equality in means is given by $\beta_A = \beta_B$, or, using the matrix notation by $C = \vektor{-1 & 1}$ and $\beta = \vektor{\beta_A & \beta_B}^\top$.

We now have two choices: we can either view $M$ as a submanifold of $\R^2$ (corresponding to an 'actual' spatial domain) or as the product of $[0,X]$ and $[0,Y]$.
In the former case, ballwise hypotheses are circles:
$$
H_0^I: \beta_A(t) = \beta_B(t), \quad |t-x| < r, \quad t \in M
$$
for $r > 0, x\in M$, whereas in the latter case we get rectangles:
$$
H_0^I: \beta_A(u,v) = \beta_B(u,v), \quad x_1 < u < x_2, \quad y_1 < v < y_2
$$
where $0 \leq x_1 < x_2 \leq X$ and $0 \leq y_1 < y_2 \leq Y$. 

An illustration of this difference can be seen in Figure \ref{fig-h0-example-ill}. 
\begin{figure}
    \centering
    \includegraphics[width=\textwidth]{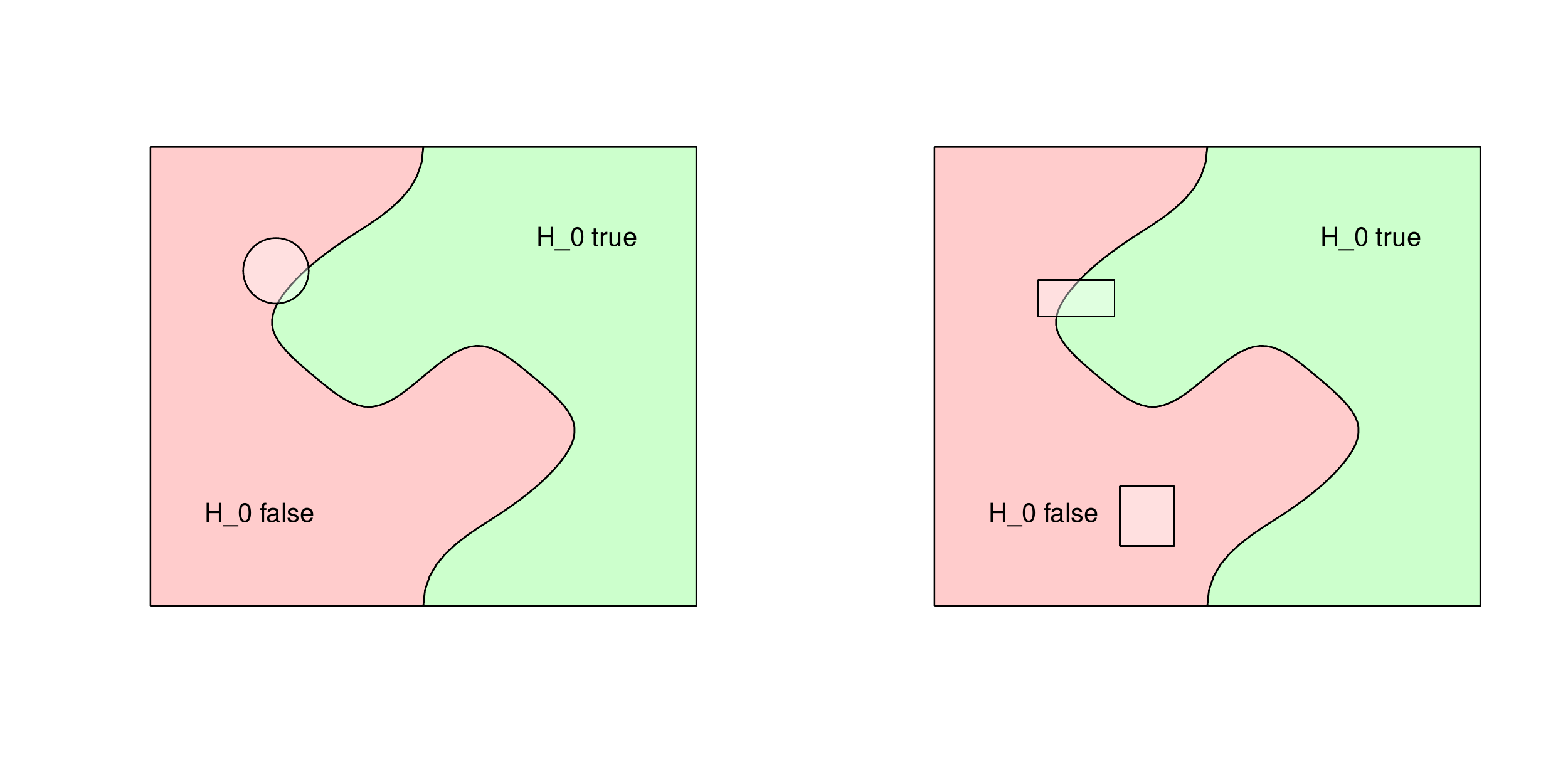}
    \caption{A 2D spatial domain (left) vs. a product domain (right). In the former case, the adjustment sets are circles. In the latter case,   the two domains are considered separate quantities, and adjustment sets are rectangles. 
    }
    \label{fig-h0-example-ill}
\end{figure}
\end{eks}

\paragraph{Inference and $p$-values}
We propose to test hypotheses using a test statistic $t: \R^N \pil \R_+$. We assume the  resulting map $T: M \pil \R_+, T(s) = t(\xi_1(s), \dots, \xi_N(s))$ to be continuous on $M$ almost surely. 

We define the ball-wise test statistic $T^I$ as
$$
T^I = \int_I T(x) \de \mu(x)
$$
and let point-wise and ball-wise $p$-values be defined in the usual sense as the probability of observing a more extreme event under $H_0$:
\begin{equation}
    p(x) = P_{H_0}(T(x) > T_{\text{obs}}(x) ) \quad x \in M
    \label{p-unad-eq}
\end{equation}
and
$$
p^I = P_{H_0}(T^I > T^I_{\text{obs}})
\quad I \in \mathcal{B}
$$
Under weak conditions it holds that  $p(x) = \lim_{I \pil x}$ for almost all $x$, where  the limit shall be understood as balls containing $x$ of decreasing radii. 
We will refer to the function $p$ (from Eq \eqref{p-unad-eq}) as the \emph{unadjusted $p$-value function}. 

\paragraph{Adjusted $p$-value function and control of error rate} 

Following the approach of \cite{pv2017interval} we define the \emph{adjusted $p$-value function} as 
$$
\tilde{p}(x) = \sup_{I \in \mathcal{B}: x \in I} p^I \quad \text{for } x \in M
$$

Let $U \subseteq M$ be the set of  points where $H^0$ is true. The pointwise, ball-wise and family-wise error rates are defined in the usual sense as the chance of committing a type I error on a given significance level $\alpha$:
\begin{align*}
    \text{pointwise:} \quad & P(p(x) \leq \alpha) \quad x \in U \\
        \text{ball-wise:} \quad & P(\exists {x \in I}: \tilde{p}(x) \leq \alpha) \quad I  \in \mathcal{B}, I \subseteq U   \\
        \text{family-wise:} \quad & P(\exists {x \in U}: \tilde{p}(x) \leq \alpha)
\end{align*}

\begin{prop}
Let $p$ and $\tilde{p}$ be the unadjusted and adjusted $p$-value functions, respectively. Then $p(x) \leq \tilde{p}(x)$ and, importantly, $\tilde{p}$ controls the ball-wise error rate:
$$
P (\exists {x \in I}: \tilde{p}(x) \leq \alpha) \leq \alpha
$$
(or)
$$
P ( \tilde{p}(x) \leq \alpha \text{ for all } x \in I) \leq \alpha
$$
\end{prop}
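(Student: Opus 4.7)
The plan is to prove the two assertions separately, each exploiting in a simple way the supremum that defines $\tilde p$.

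For the pointwise inequality $p(x)\le\tilde p(x)$, I would invoke the convergence statement recorded just above the proposition: under the almost-sure continuity of $T$, one has $p^I\to p(x)$ as $I\in\B$ shrinks to $x$, for almost every $x\in M$. This rests on a Lebesgue-differentiation-style argument for the ratio $T^I/\mu(I)\to T(x)$, combined with the fact that the observed and null processes behave the same way in the limit. Every such shrinking $I$ lies in the index set of the supremum defining $\tilde p(x)$, and a limit of values is bounded by their supremum, so
$$
p(x)=\lim_{I\pil x} p^I\ \le\ \sup_{I\in\B,\, x\in I} p^I=\tilde p(x).
$$

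For the ball-wise error rate control, fix $I\in\B$ with $I\subseteq U$. The crucial (essentially trivial) observation is that $I$ itself is an element of $\B$ that contains every one of its own points, so directly from the definition of the supremum
$$
\tilde p(x)=\sup_{J\in\B,\, x\in J} p^J\ \ge\ p^I \qquad\text{for every }x\in I.
$$
Contrapositively, if $\tilde p(x)\le\alpha$ for \emph{some} $x\in I$, then $p^I\le\alpha$; hence
$$
\{\exists\, x\in I:\tilde p(x)\le\alpha\}\ \subseteq\ \{p^I\le\alpha\}.
$$
Since $I\subseteq U$, the ball-wise null $H_0^I$ holds, so $p^I$ is a valid permutation $p$-value and $P(p^I\le\alpha)\le\alpha$. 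Monotonicity of probability then yields the stronger (``exists'') form of the claim; the weaker (``for all $x\in I$'') form is immediate since $\{\forall\,x\in I:\tilde p(x)\le\alpha\}\subseteq\{\exists\,x\in I:\tilde p(x)\le\alpha\}$.

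The main obstacle, to the extent that there is one, sits in the first part: it relies on the limit $p^I\to p(x)$, which in turn needs the almost-sure continuity of $T$ together with some regularity of the null distribution under restriction to shrinking balls. Once that limit is granted, the rest is bookkeeping about how a supremum interacts with membership and with convergent nets, and no manifold-specific machinery is required beyond the presence in $\B$ of arbitrarily small balls around each point.
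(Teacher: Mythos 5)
Your proof is correct and takes essentially the same route as the paper: the ball-wise control is obtained exactly as in the paper's own proof, by noting that $\tilde p(x)\ge p^I$ for every $x\in I$, so $\{\exists\, x\in I:\tilde p(x)\le\alpha\}\subseteq\{p^I\le\alpha\}$ and $P(p^I\le\alpha)\le\alpha$ under $H_0^I$. Your limit argument for $p(x)\le\tilde p(x)$ simply fills in the step the paper dismisses as ``trivial'' (and correctly flags that, via the a.e.\ limit $p^I\to p(x)$, the inequality is only guaranteed almost everywhere).
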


\begin{proof}
$\tilde{p}(x) \leq p(x)$ is trivial. Let $I \in \mathcal{B}$ be given. If we assume $H^t_0$ is true for all $t \in I$, then $P(p^I \leq \alpha) \leq \alpha$. But $\tilde{p}(x)  \geq p^I$ for all $x \in I$, and so $P (\forall {x \in I}: \tilde{p}(x) \leq \alpha) \leq \alpha$.
\end{proof}

\begin{corr}
If the set of points where $H^0$ is true can be characterised by $I$ for some $I \in \mathcal{B}$, then the adjusted $p$-value function $\tilde{p}$ controls the family-wise error rate. 
\end{corr}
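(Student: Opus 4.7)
The plan is to derive the corollary as an essentially immediate specialization of the preceding proposition. The observation is that the family-wise error rate is defined as a supremum over $x \in U$, where $U$ is the (unknown) set of points at which the pointwise null holds; meanwhile the proposition already controls, for \emph{every} $I \in \mathcal{B}$ with $I \subseteq U$, the probability $P(\exists x \in I : \tilde{p}(x) \leq \alpha)$ at level $\alpha$. So the whole task is to reduce the family-wise event to a ball-wise event.

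Concretely, I would proceed as follows. First, invoke the hypothesis that $U$ itself coincides with some adjustment set $I^\star \in \mathcal{B}$. Since $I^\star \subseteq U$ trivially (they are equal), the proposition applies to $I^\star$ and yields
\[
P\bigl(\exists x \in I^\star : \tilde{p}(x) \leq \alpha \bigr) \leq \alpha .
\]
Next, observe that because $U = I^\star$, the event $\{\exists x \in U : \tilde{p}(x) \leq \alpha\}$ appearing in the definition of the family-wise error rate is literally identical to the event $\{\exists x \in I^\star : \tilde{p}(x) \leq \alpha\}$. Substituting gives
\[
P\bigl(\exists x \in U : \tilde{p}(x) \leq \alpha \bigr) \leq \alpha ,
\]
which is precisely FWER control at level $\alpha$.

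There is no real obstacle here, since the corollary is a direct rewriting: the hard work lives inside the proposition, while the corollary just records what happens in the (favourable) case that the true null region $U$ happens to lie in the prespecified adjustment family $\mathcal{B}$. The only thing worth a line of commentary in the write-up is the caveat that this guarantee is conditional on $U \in \mathcal{B}$; in general $U$ need not be a ball product, which is exactly why the main proposition only promises ball-wise (rather than family-wise) control.
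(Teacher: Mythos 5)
Your proof is correct and is exactly the intended argument: the paper states this corollary without proof precisely because it is the immediate specialization of the proposition to $I^\star = U$, which is what you carry out. Nothing is missing.
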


\section{Approximation and implementation} \label{implement-sect}
Though the definitions presented in the previous section are mathematically straightforward, actual evaluation of the adjusted and unadjusted $p$-value functions poses a challenge, and has to rely on an approximation, as with any methodology for functional data.

In the one-dimensional case, ie. when $M = [a,b] \subset \R$, evaluating $T^I$ is relatively straightforward. 
It has been studied in several papers starting with \cite{pv2017interval} who used an equidistant grid for approximating $T^I$. 

For manifolds $M$ which are products of intervals $[a_i, b_i]$, generalizing the disrectization procedure of \cite{pv2017interval} is also straightforward, but has not previously been published.

However, the discritization and evaluation becomes challenging when $M$ involves multidimensional manifolds, in particular if the curvature is non-zero.
\cite{olsen2021} proposes to map the manifold into some set $T \subseteq \R^k$ and use a corresponding weighted measure. This solves the issue of curvature, but not that of discretization.

The focus of our proposal is on the case of non-zero curvature, in general considering the dimension of $M_i$ is low (ie. two or three). Most applications would fall into this category as physical space only has three spatial dimensions. 

\subsection{Approximating the test statistic $T^I$} \label{sec-approx-test}
We propose to use triangulation to approximate test statistic values $T^I$. First note that for $I \in \mathcal{B}$, $I$ is a Cartesian product of balls. By Fubini's theorem, we can expand  integrals on $I$ as:
$$
\int_I f(u) \de u = 
\int_{B(x_1, \epsilon_1)} \dots \int_{B(x_n, \epsilon_n)} f(u) \de u_n \dots \de u_1
$$
where $I$ here is on the form \eqref{iwt-ball}.
Hence we can restrict our focus to "single" manifolds and use recursion to obtain the final result $T^I$. For the remainder of this section we will therefore assume that $M$ is the product of a single manifold, $M = M_1$ of dimension $k$.

Let $x \in M$ and $r > 0$ be given. We evaluate $\int_I T(u) \de u$ by the following procedure: 

\begin{enumerate}
    \item A set of points $ E = \{e_1, \dots, e_m\} \in M$ are selected.
    \item A triangulation $\mathbf{T}$ from $E$ is constructed, where each element of $\mathbf{T}$ is a $k$-simplex. 
    \item The integral is approximated from $\mathbf{T}$ in algorithm described below. 
\end{enumerate}
By default, one would use the same triangulation for all $I \in \mathcal{B}$ and all realisations of $T$. In the followin, we describe how to deal with points 1, 2, and 3.

\paragraph{Point selection and triangulation} There exist a vast literature for triangulations of a manifold, in particular when $M$ is a 2D surface in $\R^3$ due to its use in computer graphics. 
Procedures for triangulation is beyond the scope of this article, though we note that properties such as Delaunay triangulation and a roughly uniform grid are reasonable assumptions.

\paragraph{Integral approximation:} For simplicity, we only consider here the case $k= 2$. It  is easily extended to the case $k > 2$.

Let $(E, \bt)$ denote a triangulation of $M$, where $E$ is the set of vertices corresponding to $\bt$.
For $S \in \mathbf{T}$, let $v(S)$ denote the vertices of $S$ in $E$, this is a set of size $3$. 

We define the {area} $A(S)$ as the area of the (euclidean) triangle $S$ spanned by $v(S)$, using the geodesic distances. If $S$ is sufficiently small, then $\sigma(S)  \approx A(S)$ where $\sigma(S)$ is the measure induced by the manifold. 
Next, we use $A(S)$ to weight the vertices in $E$. In detail, define
\begin{equation}
W(e) = \frac{1}{3} \sum_{S: e \text{ is a vertex of }  S} A(S), \quad e \in E
\label{weight-eq}    
\end{equation}

Then, for $f: M \pil \R$, we approximate $\int_{B(e, r)} f(x) \de x$ by:
\begin{equation}
\int_{B(x, r)} f(u) \de u = \sum_{\{v \in E: d(x,e) < r \} } W(e) f(e) \label{integral-approx-eq}
\end{equation}
 For a sphere, the relative error between $\sigma(S)$ and $A(S)$ is quadratic  in $(a/R)^2$, where $a$ is a triangle side length and $R$ is the radius of the sphere\footnote{If we consider a right spherical triangle with sides $(a,a, \arccos(\cos^2 a))$, then $\sigma(S) = \frac{1}{2} a^2 + \frac{1}{12} a^4 + o(a^5)$ and $A(S) = \frac{1}{2} a^2 + o(a^5)$ }, and the error between $f(u)$ and $f(e)$ can be controlled if we for instance assume a Lipschitz condition on $f$. 
 A more detailed error assessment depends intrinsically on the geometry of the surface and the triangulation, and is left for future studies.

\subsection{Inference of the test statistic under $H_0$} \label{sec-test-inf}
We can approximate $T^I$ arbitrarily well using \eqref{integral-approx-eq}, however the distribution of $T^I$ under $H_0$ depends on the distribution of the error signals $\epsilon_k$ and is generally intractable. 

Following \cite{pv2017interval} and \cite{abramowiczmox} we propose to use permutation tests, more specifically the Freedman and Lane permutation scheme \cite{freedman1983nonstochastic}, as defined by \cite{pesarin}. 
If the model and hypothesis is "simple" (ie. data are iid under $H^0$, then the permutation test is exact and controls the $p$-value, otherwise the $p$-value control is only asymptotically exact.

\subsection{Computational considerations}
Any suitably fast implementation would calculate all pairwise distances between points in $E$ up front. However, if the size of $E$ is large (e.g. 100000), the number of pairwise distances is $\approx E^2/2$ and will exceed the memory capacity of a typical laptop. Therefore we do not recommend using more than 10000 points, the simulation and applications used 8450, 6252 and 9695 points, respectively. 

We also note that the testing part (ie. permuting the data and applying the method described in this section) and the adjustment part are separate procedures, and that the 'cost of adjustment' does not increase with increased number of permutations.  

\section{Simulation study} \label{sect-sim}

\subsection{Simulation study 1: "The chameleon"}
To demonstrate our methodology on an unusually shaped manifold, we decided to apply it to a "chameleon figure": a 3D graphic of a chameleon\footnote{Courtesy of ADAPTA Studio (\cite{locatelli1}, \cite{perotto1}).}.
As seen on Figure \ref{fig-cam2}, the chameleon has many details, making it a highly non-trivial manifold. 

\begin{figure}
    \centering
\includegraphics[width=0.24\textwidth,trim = {195 55 195 0},clip]{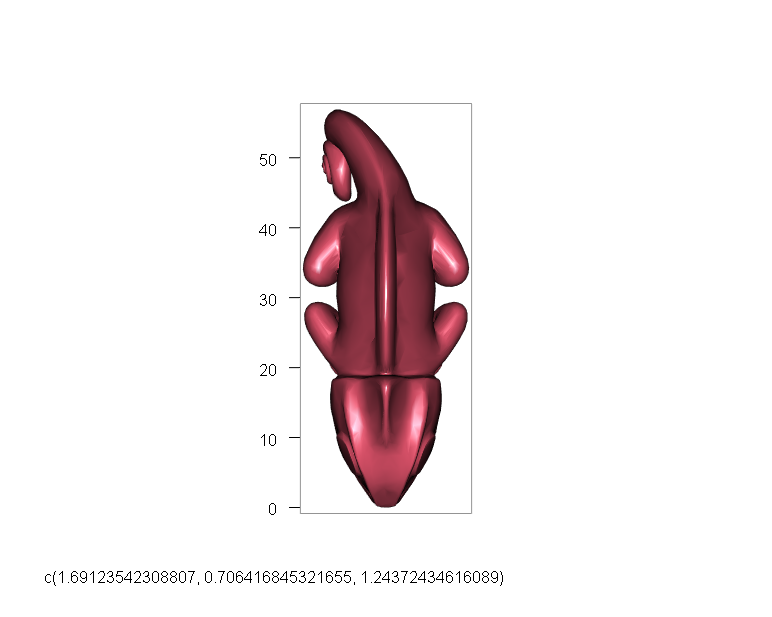}
\includegraphics[width=0.24\textwidth,trim = {195 55 195 0},clip]
{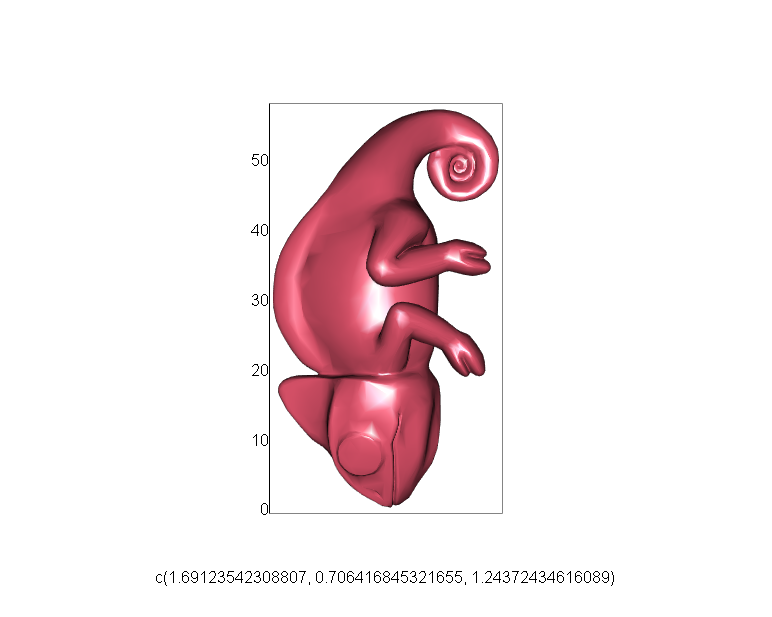}
\includegraphics[width=0.24\textwidth,trim = {195 55 195 0},clip]{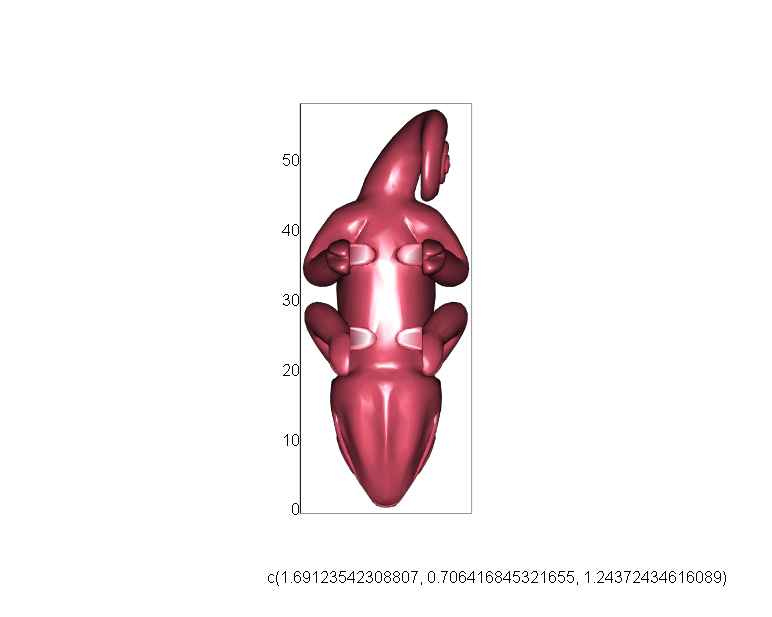}
\includegraphics[width=0.24\textwidth,trim = {195 55 195 0},clip]{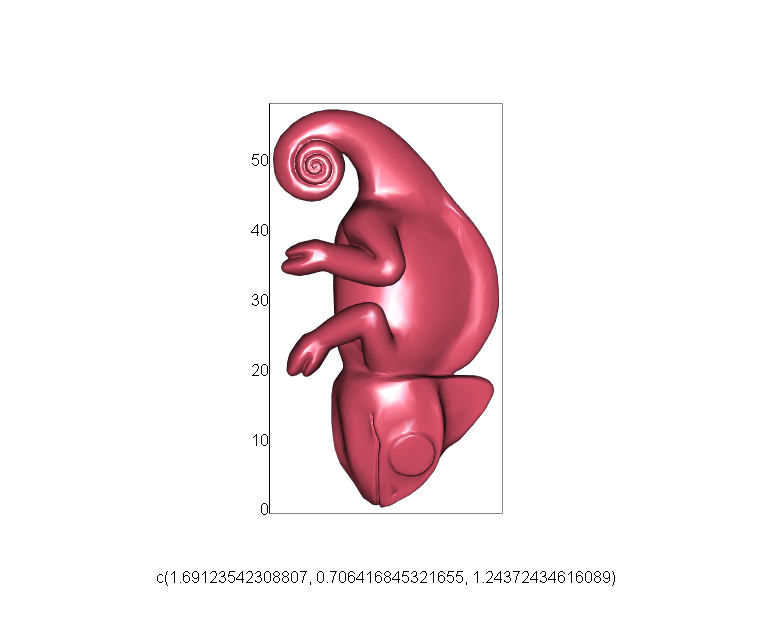}\\ \includegraphics[width=0.25\textwidth,trim = {255 75 290 105},clip]{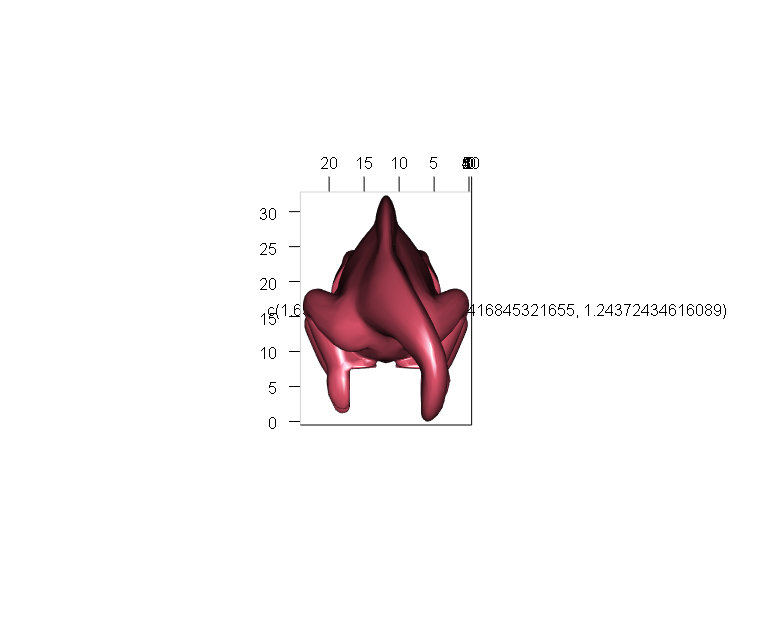}
\includegraphics[width=0.25\textwidth,trim = {205 5 255 165},clip]{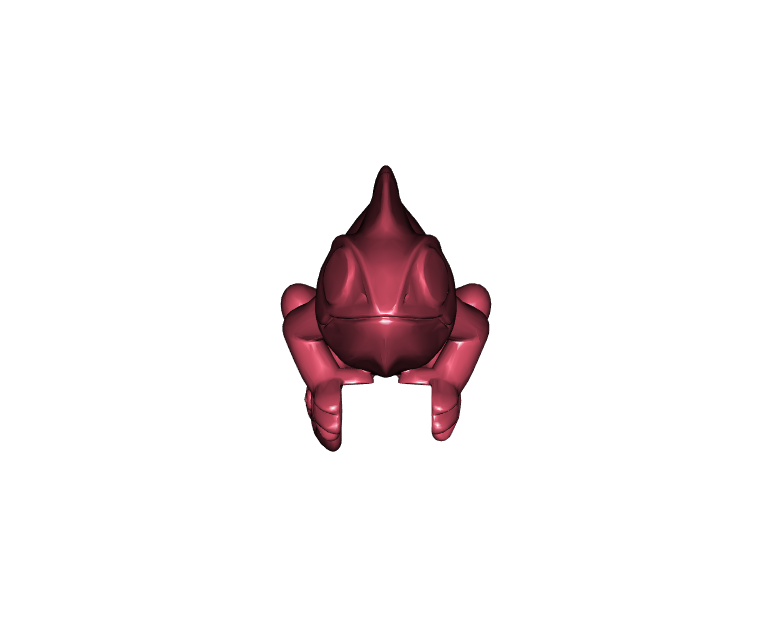}\\
\includegraphics[width=0.49\textwidth]{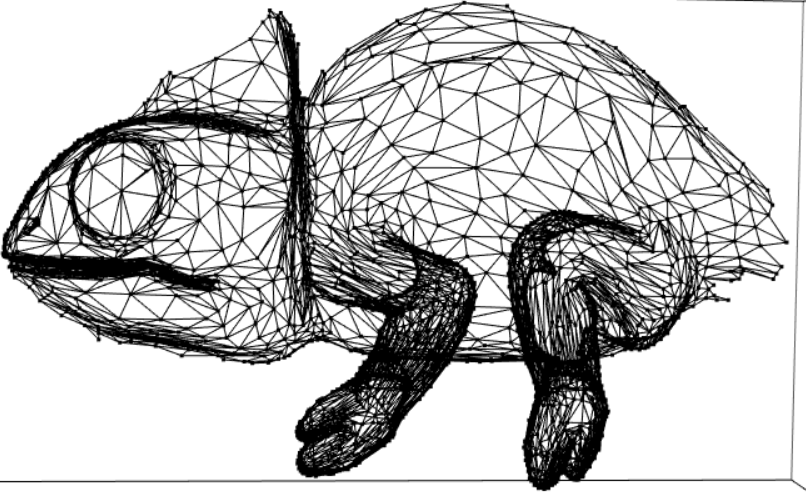}  
    \caption{Views of the chameleon and its triangulation}
    \label{fig-cam2}
\end{figure}

The chameleon figure is represented by a triangulation consisting of 8450 points. We applied this triangulation directly into our procedure, and we used the inherent distance as found by Dijkstra's algorithm.

\paragraph{Model and hypothesis} We simulated data according to the model $y_i = \theta_i + \epsilon_i$, for $i = 1, \dots, N$, where $\theta_i$ is either zero or the base signal $\phi$, and $\epsilon_i$ is an error term generated using a Gaussian kernel. Half of the observations were assigned $\theta = 0$ and the other half $\theta = \phi$. Hence this is a two-sample setup, and we considered the point-wise null hypotheses $H^0_x: \phi(x) = 0$.

As base signal $\phi$ we selected zero for most of the chameleon, and a constant $c$ for the remaining chameleon, this corresponds to the region where $H_0$ is false. 
We selected two different settings for the regions of non-zero signal: the four feet and the right side of the chameleon's 'crest'. The two regions of non-zero signal are illustrated in Figure \ref{figure-cameleon-signal}.
We also varied the number of samples $N$ and the maximal radius $r$ (the length of the cameleon is 57 units, so the extreme case of $r = 0.5$ corresponds to a ball of very small size. In total we have 12 different scenarios; an overview is in given in Table \ref{table-cameleon-scen-overview}.

\begin{figure}
\centering
\includegraphics[width=0.49\textwidth]{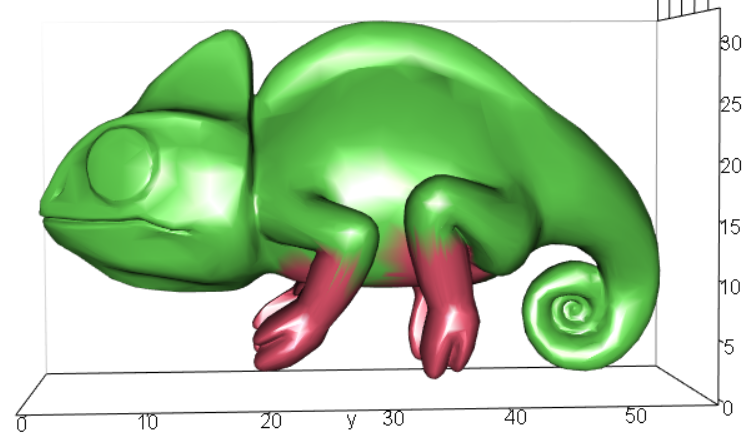}
\includegraphics[width=0.49\textwidth]{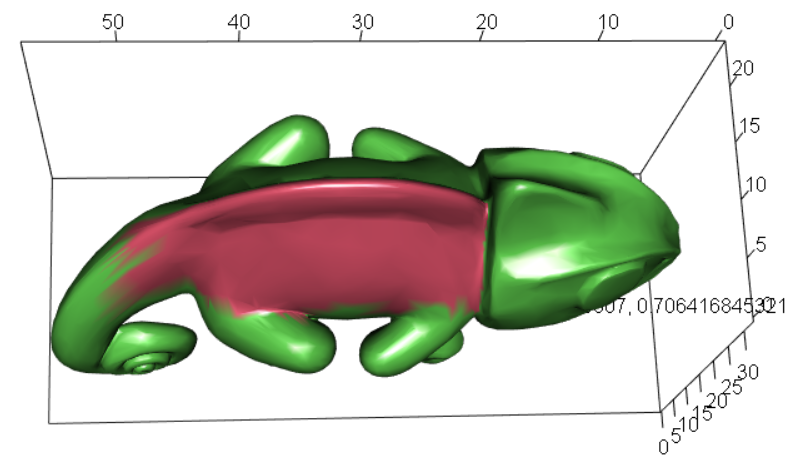}
    \caption{Base signals for cameleon simulation. Left: four feet. Right: right side of crest}
    \label{figure-cameleon-signal}
\end{figure}

\begin{table}[!htb]
    \centering
    \begin{tabular}{cccc}
    Scenario & Region of $H_0$ false  & no. of samples ($N$) & max radius ($r$)\\ \hline
      1 & Four feet & 20 & $\infty$ \\
    2 & Four feet  & 10 & $\infty$  \\ 
    3 &Four feet   & 40 & $\infty$   \\
    4 & Four feet  & 20 & $10$  \\
    5 & Four feet  & 20 & $3$ \\ 
    6 & Four feet  & 20 & $0.5$ 
    \end{tabular}
    \caption{Parameters for the different "feet" scenarios. The six "crest" scenarios have the same parameters for $N$ and $r$.}
    \label{table-cameleon-scen-overview}
\end{table}

\paragraph{Test}
As test statistic we used the pointwise t-test statistic: $T(x) = \left(\frac{\overline{y_1(x)} - \overline{y_2(x)}}{s(x)}\right)^2$, where $\overline{y_1(x)} $, $\overline{y_2(x)}$ and $s(x)$ are the average of the first half of the signals, the average of the second half of the signals and the estimated  standard deviations, respectively, evaluated in $x$. Inference of the test was done according to the procedure outlined in Section \ref{sec-approx-test} and \ref{sec-test-inf}.
We used 500 permutations for the permutation tests, and each scenario was repeated 250 times. All tests were performed on a 5\% significance level. 

\paragraph{Results} 
We calculated the sensitivity (power), family-wise error rate (FWER), false positive rate and false discovery rate for each of the scenarios, results can be found in Table \ref{table-cam-results}. As expected sensitivity increased with  number of samples, whereas FWER, the false positive rate and false discovery rate remained stable. 
Decreasing $r$ gave interesting results: we observe a large large increase in sensitivity, however at the cost of increasing the false discovery rate. 
At the lowest value of $r$, sensitivity reached 1 and the false positive rate reached 0.05 (the significance level), indicating virtually no adjustment. 

\begin{table}
Feet: \\
\begin{tabular}{c|cccc}
Scenario & Sensitivity & FWER & False positive rate & False Discovery Rate  \\ \hline
1 & 0.653 & 0.264 & 0.004 & 0.016 \\
2 & 0.392 & 0.204 & 0.004 & 0.021 \\
3 & 0.815 & 0.216 & 0.004 & 0.012 \\
4 & 0.816 & 0.280 & 0.008 & 0.026 \\
5 & 0.959 & 0.564 & 0.028 & 0.076 \\
6 & 1.000 & 0.696 & 0.049 & 0.119 
\end{tabular} \\
Crest: \\
\begin{tabular}{c|cccc}
Scenario & Sensitivity & FWER & False positive rate & False Discovery Rate  \\ \hline
1 & 0.282  & 0.096 &  0.004 & 0.028 \\
2 & 0.062 & 0.108 & 0.002 & 0.036 \\
3 & 0.535 & 0.128 & 0.005 & 0.028 \\
4 & 0.642 & 0.312 & 0.011 & 0.066 \\
5 & 0.945 & 0.564 & 0.033 & 0.146 \\
6 & 0.999 & 0.648 & 0.053 & 0.207 
\end{tabular}

\caption{Results of simulation on a 5\% significance level. Upper table: "four feet" scenarios. Lower table: "crest" scenarios. }
    \label{table-cam-results}
\end{table}

\section{Applications} \label{sect-application}
We apply the presented methodology to two data sets related to global warming: satellite measurements of seasonal temperature changes, and satellite measurements of sea ice cover. The two applications are related as rising temperatures cause the ice cover to decrease, but the manifolds of interest are very different: $S^2 \times S^1$ and a complicated subset of $S^2$.



\subsection{Application 1}
\paragraph{Data and model}
	Data consists of monthly averages of temperatures, starting in 1983 and ending in 2007, for each $1^\circ \times 1^\circ$ tile on Earth, using standard latitudes and longitudes. Temperatures are satellite measurements collected by NASA.\footnote{These data were obtained from the NASA Langley Research Center Atmospheric Science Data Center Surface meteorological and Solar Energy (SSE) web portal supported by the NASA LaRC POWER Project. 
 \url{http://eosweb.larc.nasa.gov}} 
The aim is to test for a positive increase in temperature, for each point on Earth and each month of the year. Since year has a natural periodic structure, we identify it with $S^1$. Earth is naturally identified with $S^2$, hence the domain for our data is the manifold $S^2 \times S^1$.

In detail, we applied the following linear regression model:
\begin{equation}
    y_i(x,s) = a(x,s) + b(x,s) t_i + \epsilon_i(x,s),
    \quad x \in S^2, s \in S^1, i \in \{1, \dots, 25\}
\end{equation}
where $t_i$ corresponds to years from  $1983$ to 2007.

\paragraph{Hypothesis testing and inference} 
We tested for \emph{positive trend} i.e.
$H_0(x,s): b(x,s) = 0$ with alternative hypothesis $H_A(x,s): b(x,s) > 0$. 
To test this, we used a permutation test to evaluate the hypothesis, using the t statistic with cutoff, ie:
$$
T(x,s) = \max \left(0, \frac{\hat{b}(x,s)}{SE(b(x,s))} \right)
$$
where $\hat{b}$ and $SE(b)$ are the estimate and standard error for $b$, respectively. 

In order to study the trade-off between large correction sets and power, we applied various maximal radii for the ball sizes on Earth (cf. Equation \eqref{iwt-ball}).

\paragraph{Triangulation and preprocessing}
One crucial feature is that data were more densely sampled closer to the poles than close to Equator. This is a bad choice for the triangulation, so we used an icosahedron-based tessellation of the sphere. The tessellation was of order 25. That is, each face of the icosahedron was partioned into $25^2$ triangles, totalling 6252 vertices for the triangulation. 
To get values for this grid, we applied a local linear smoother using the kernel $K(x,y) = max(\tfrac{\pi}{180} - d(x,y), 0)$,
where d is geodesic distance on the sphere, measured in radians.
The distance between two neighbouring points in the triangulation is roughly 300 km. 

For the seasonal cycle on $S^1$, we used a grid of 12 points, one for each month. 

\paragraph{Results} 
To investigate the trade-off between Type I and II errors, we applied the presented methodology with three different radii, 
$\{r_1, r_2, r_3\}$ where $r_1 \approx 1618 \mathrm{km}, r_2 \approx 3653 \mathrm{km}, r_3 \approx 7660 km$.

\begin{figure}
    \centering
    \includegraphics[width=0.8\textwidth]{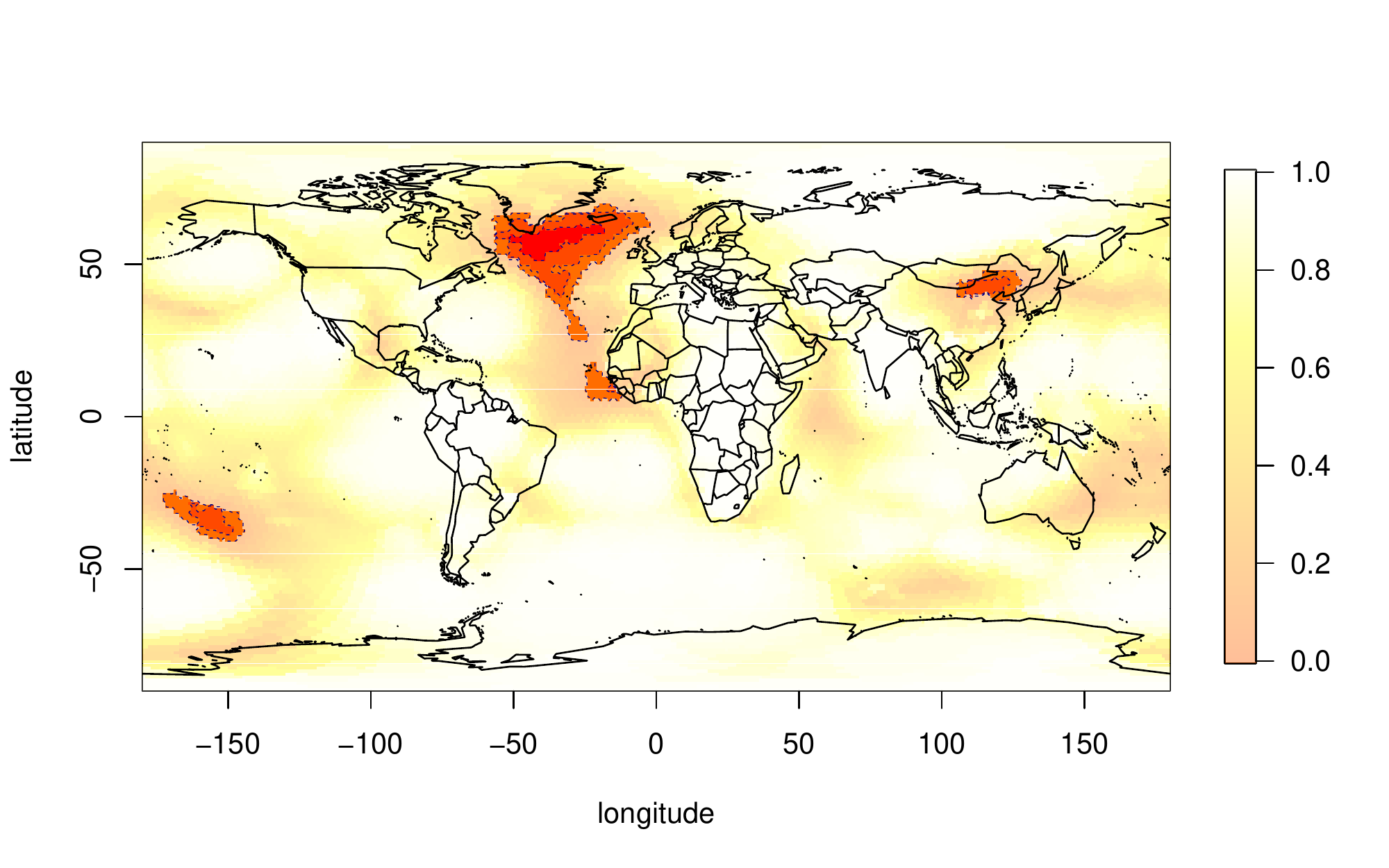}
    \includegraphics[width=0.8\textwidth]{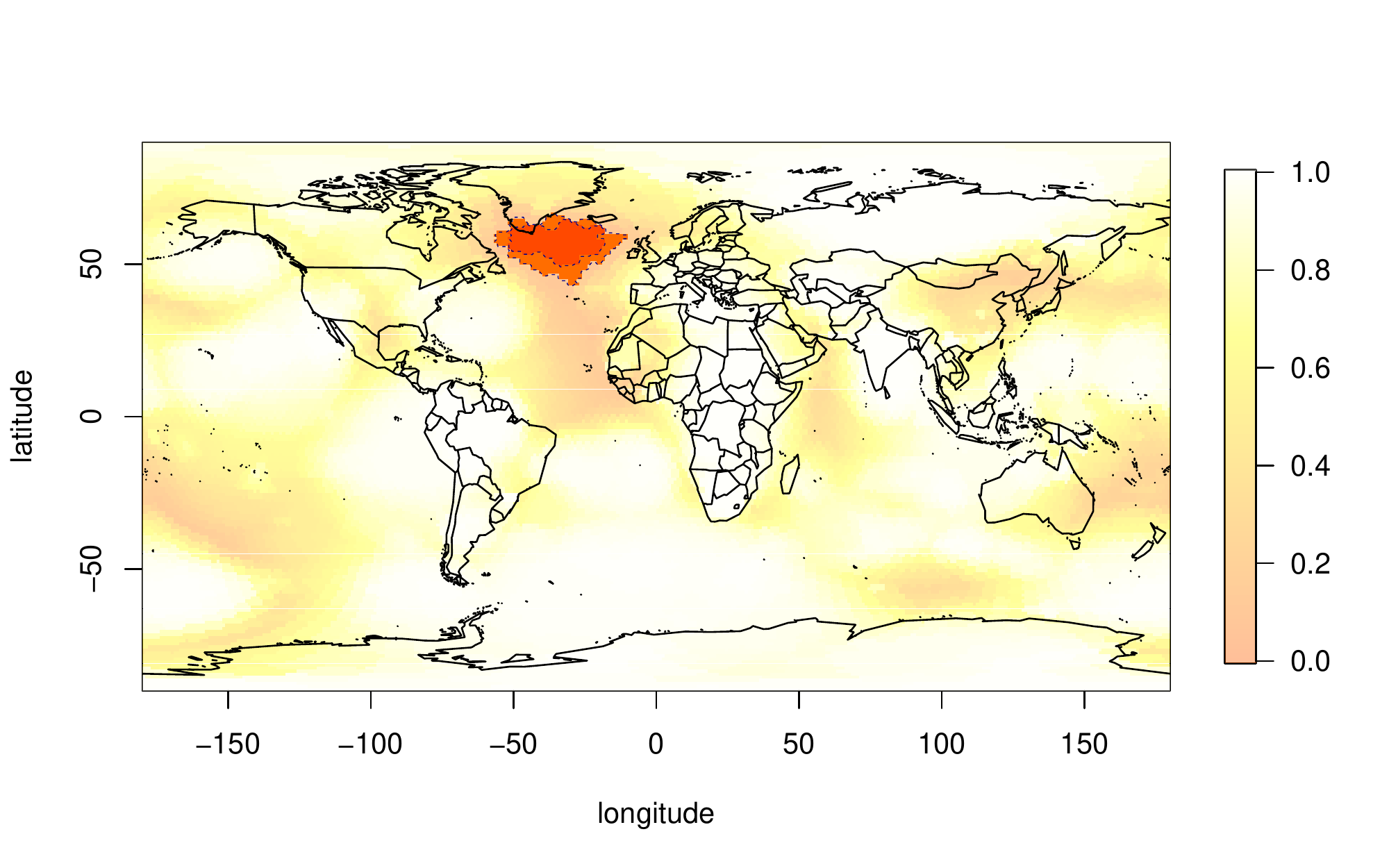}
    \includegraphics[width=0.8\textwidth]{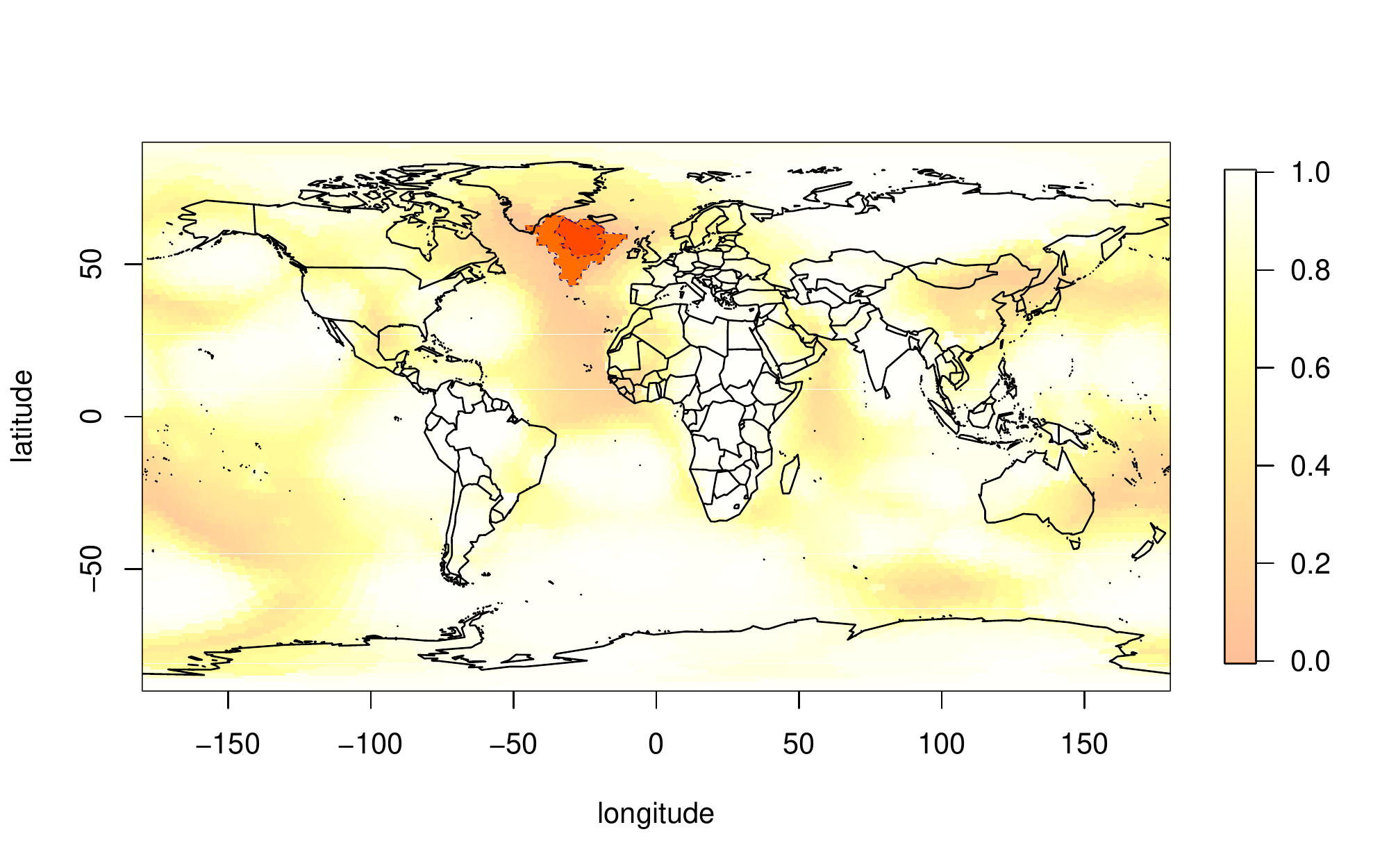}
    \caption{Adjusted $p$ values in September for three different max radii. Upper $r_1 \approx 1618\mathrm{km}$, Middle: $ r_2 \approx 3653\mathrm{km}$ and bottom: $r_3 \approx 7660 km$}
    \label{fig-sep-r3}
\end{figure}
Results for September are shown in Figure \ref{fig-sep-r3}. 
As expected, the regions of significance decrease with increased radii. At our largest scale, only a small region of the Northern Atlantic is significant. 
The results can be interpreted in following way: \textit{we find significant evidence of global warming within every ball of radius $< 7700 $km around $(26W, 56N)$ and every "season" spanning September}.

\begin{figure}
    \centering
    \includegraphics[width=\textwidth]{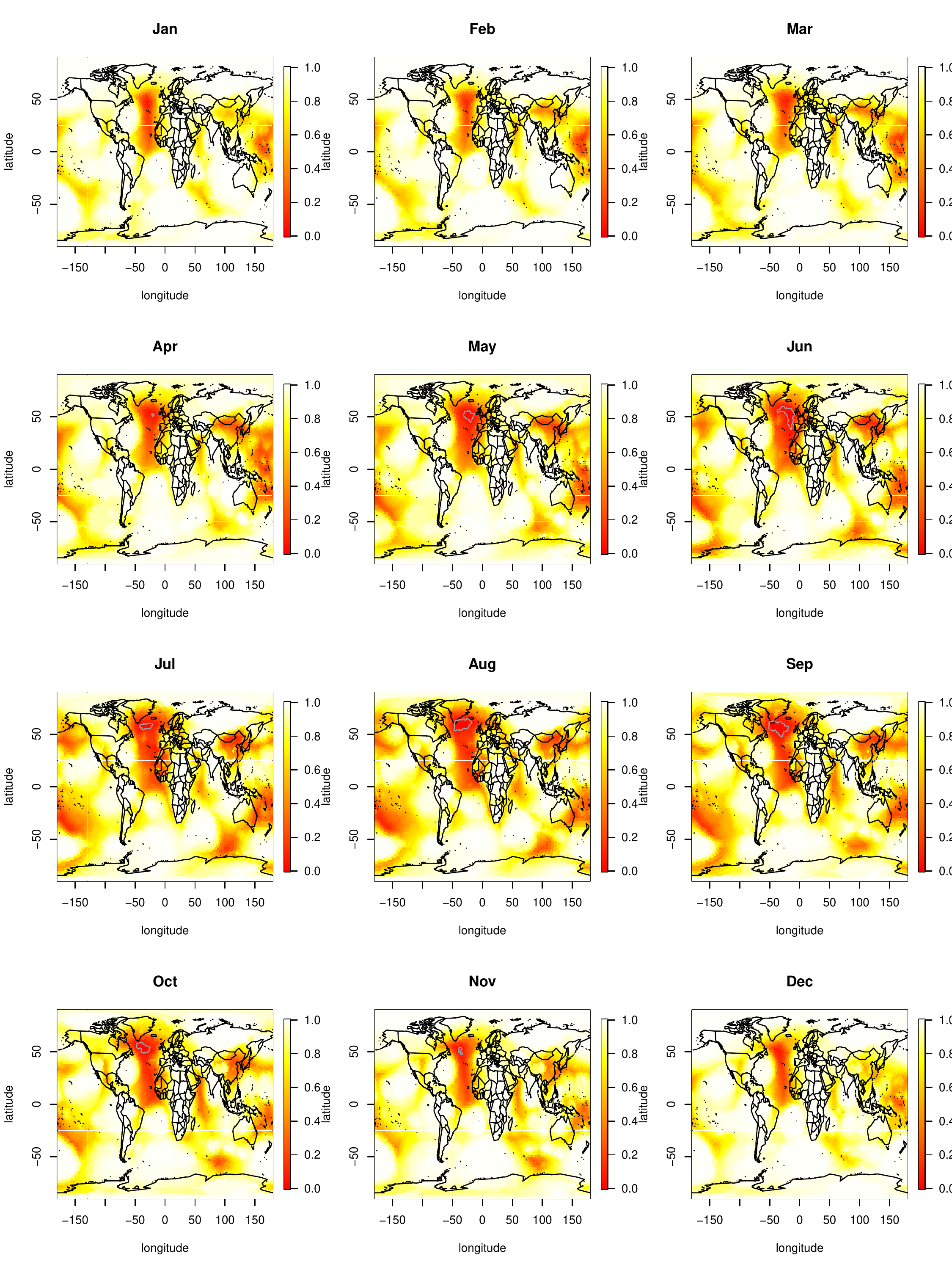}
    \caption{Adjusted $p$ values, $r = r_3 \approx 7660 km$}
    \label{just-pv-aar}
\end{figure}

In Figure \ref{just-pv-aar}, we are considering the different months: Here is the trend clear: we see stronger evidence of global warming in the Northern summer months, and again only the Northern Atlantic has significance.

\subsection{Application 2: Ice cover for the Northern Hemisphere} 
\paragraph{Data and preprocessing}
Data consists of yearly measurements of ice cover on the northern hemisphere, as measured by the satellites of Copernicus Programme\footnote{Copernicus Climate Change Service, Climate Data Store, (2020): Sea ice concentration daily gridded data from 1979 to present derived from satellite observations. Copernicus Climate Change Service (C3S) Climate Data Store (CDS), DOI:10.24381/cds.3cd8b812}. We use the observations from 1 February in the years 1987--2015. Data were given as a percentage between 0 (no ice) and 100 (full ice cover).

We preprocessed the data first by sieving the data 
to a finely spaced grid of 28749 data points with neighbouring distance $\approx 60 km$. 
From this grid a triangulation $G$ was constructed and weights for each point calculated (cf. formula \eqref{weight-eq}).
As a second step, the region of interest was trimmed and all non-sea points removed. This resulted in the map underlying Figures \ref{istrend-figur} and \ref{isplot-figur} and constitutes our manifold $M$. In total, $M$ is approximated by 9695 points. 
To calculate distances between two points in $V$, Dijkstra's algorithm was applied on the original triangulation $G$, but restricted to edges connecting vertices in $V$. This ensured that only sea paths were followed; the longest distance was around 14000 km connecting two points on the east coast of Asia and east coast of North America through a winding route.

\paragraph{Hypothesis testing and inference} 
We applied the following linear regression model:
\begin{equation}
    y_i(x) = a(x) + b(x) t_i + \epsilon_i(x),
    \quad x \in M, i \in \{1, \dots, 29\}
\end{equation}
where $t_i$ corresponds to years from  $1987$ to 2015, 
and tested for changes in trend, $H_0(x): b(x) = 0$ with a two-sided alternative.
We remark that a linear model is not a good choice for modelling ice cover as a function of time, however for the purpose of inference using permutation tests, this is an excellent choice. 

The test statistic used was the squared trend: 
$$
T(x) = \hat{b}^2(x) = \left(\frac{\sum (y_i(x) - \bar{y}(x))(t_i - \bar{t})}{\sum (t_i -\bar{t})^2} \right)^2, \quad x \in M
$$ 

This ensured that regions contributed to the test statistic proportional to how much the ice cover varied (in February, across years). We used 500 permutations in the permutation test.

\paragraph{Results}
\begin{figure}
    \centering
    \includegraphics[width=0.45\textwidth]{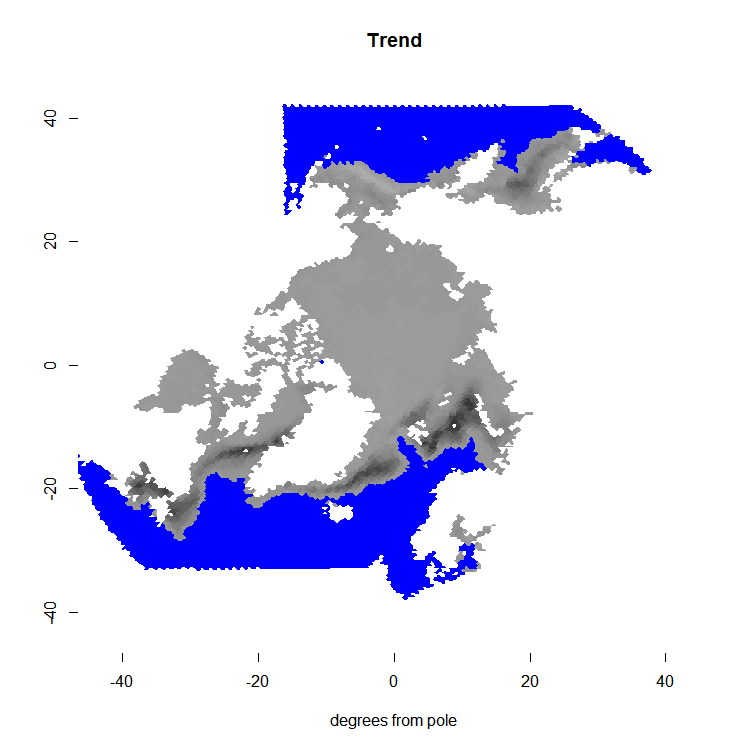}
    \includegraphics[width=0.45\textwidth]{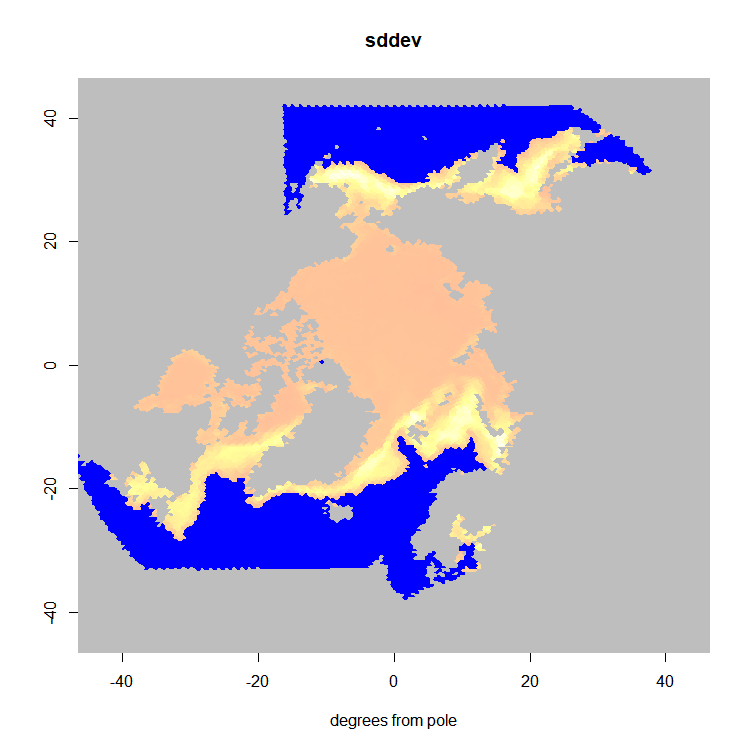} \\[-0.2cm]
    \caption{Pointwise estimates for trend and residual standard deviation, black indicates decrease in cover and orange indicates low standard deviation. Blue are ice-free regions}
    \label{istrend-figur}
\end{figure}
As illustrated in Figure \ref{istrend-figur}, the variation in data is seen along the edges of the arctic, mostly associated with decreasing ice cover.

Maps of unadjusted and adjusted p-values can be seen in Figure \ref{isplot-figur} including an intermediate adjustment with $r_\text{max} = 0.2$ radians ($\approx 1275$km). Note that $p$ values are trivially equal to 1 in the blue areas (ice-free regions). 

\begin{figure}
    \centering
    \includegraphics[width=0.5\textwidth]{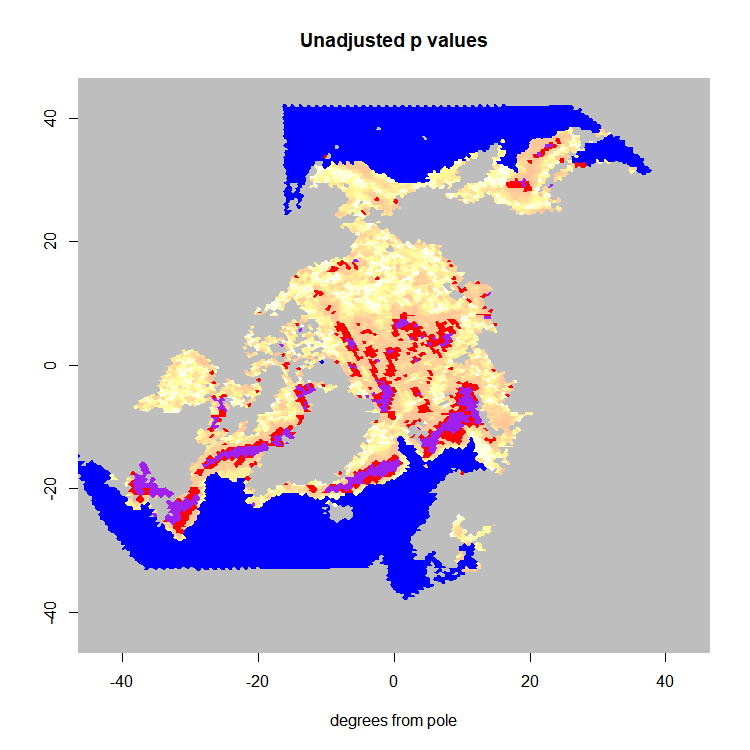} \\[-0.2cm]
    \includegraphics[width=0.5\textwidth]{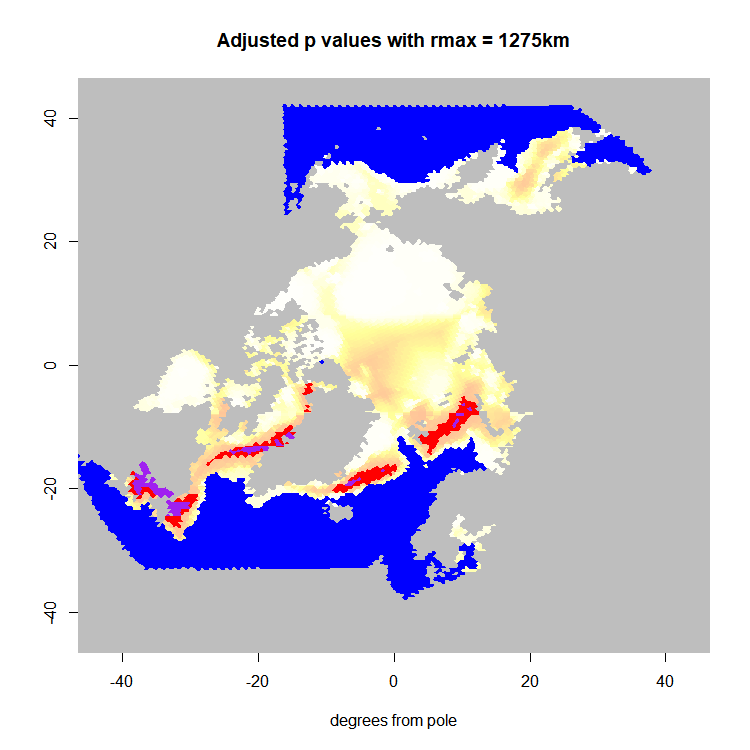} \\[-0.2cm]
    \includegraphics[width=0.5\textwidth]{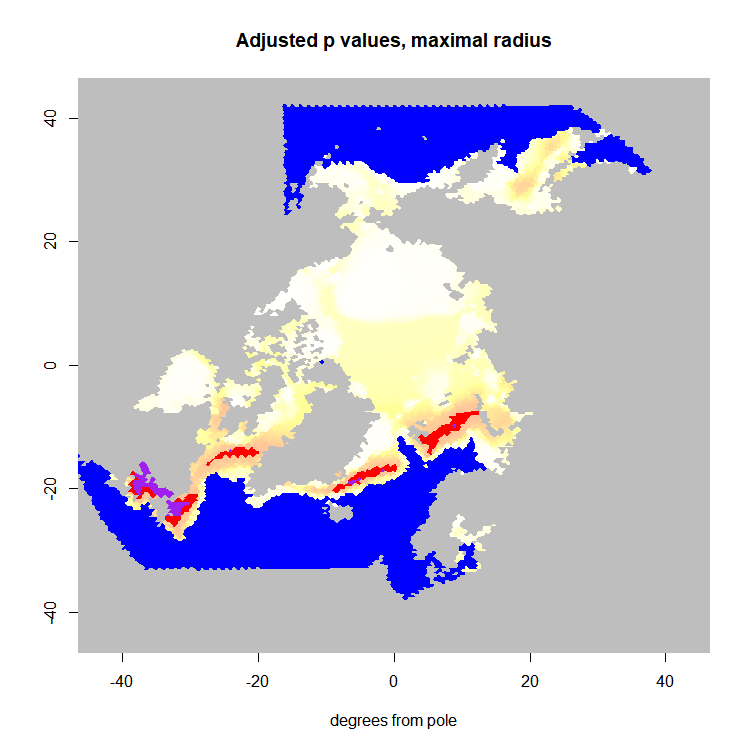} \\
    \caption{Unadjusted and adjusted $p$ values for the ice cover application. Purple denotes areas that are significant on a 1\% level, red areas that are significant on a 5\% level. Blue are ice-free regions}
    \label{isplot-figur}
\end{figure}
We see a rather noisy signal in the unadjusted p-value map, however this soon disappears even with a moderate adjustment as seen in the second figure. At the largest adjustment radius some regions are still significant: we find significant evidence of loss of ice cover in the Barents Sea, Greenland Sea, Baffin Bay and the waters around Newfoundland, even after the strongest adjustment possible. 
We also remark the strong transition from highly significant regions to non-significant regions in some places; this is due to the fact that ice-free regions do not contribute to the test statistic.

\section{Discussion} \label{sect-discuss}
In this paper we have introduced a new methodology for local inference on manifold domains. 
    The methodology is very flexible with its basic assumption being that data are i.i.d. under the null hypothesis. This contrasts for instance with \cite{olsen2021} which has the very technical assumption PDRS that is impossible to verify in practice. We demonstrated the methodology in two data examples and a simulation, where the manifold domains were far from trivial. 

We would also like to stress the flexibility in terms of data and domain. This was demonstrated in the domains used for the simulation and applications. For the second application we also chose some difficult data: each instance was map of highly correlated percentage values, which would be a major challenge to model using parametric methods. 

We formulated our methodology using Riemannian manifolds. %
However despite the naturalness of Riemannian manifolds, we may in principle generalize our notion of local inference to all mathematical objects with non-trivial Hausdorff measure such  as fractals. We would be very curious to see this applied to functional data defined on a fractal domain. 
An interesting discussion point is the maximal radius
that performs trade-off between Type I and Type II errors. In this paper $r$ is selected a priori, but it would of interest to select $r$ using a data-driven approach.

Since the proposed method is based on permutation tests and require evaluating $T^I$ for all discretized $I$, the proposed method has a fairly high computational cost. However it should not be a big obstacle in practice; the procedure was implemented in R and the two applications ran on a typical high-end laptop. 

We identify several possible directions for future research including:
weighted measures, functional covariates, functional-on-scalar
linear models (cf. \cite{abramowiczmox}) and a deeper investigation into  selection of the radius parameter. We leave  this as future work. 
 
 \section*{Acknowledgements}
We are grateful to Associate Professor Simona Perotto (Politecnico di Milano) and ADAPTA Studio\footnote{\url{https://adapta.studio/}} for granting us use of the chameleon in the simulation study. 
 
\bibliographystyle{plain}

\begin{thebibliography}{10}
	
	\bibitem{abramowiczmox}
	Konrad Abramowicz, Charlotte~K H{\"a}ger, Alessia Pini, Lina Schelin, Sara
	Sj{\"o}stedt~de Luna, and Simone Vantini.
	\newblock Nonparametric inference for functional-on-scalar linear models
	applied to knee kinematic hop data after injury of the anterior cruciate
	ligament.
	\newblock {\em Scandinavian Journal of Statistics}, 45(4):1036--1061, 2018.
	
	\bibitem{BH1995}
	Yoav Benjamini and Yosef Hochberg.
	\newblock Controlling the false discovery rate: a practical and powerful
	approach to multiple testing.
	\newblock {\em Journal of the royal statistical society. Series B
		(Methodological)}, pages 289--300, 1995.
	
	\bibitem{freedman1983nonstochastic}
	D.~Freedman and D.~Lane.
	\newblock A nonstochastic interpretation of reported significance levels.
	\newblock {\em J. Bus. Econ. Stat.}, 1(4):292--298, 1983.
	
	\bibitem{liebl2019}
	Dominik Liebl and Matthew Reimherr.
	\newblock Fast and fair simultaneous confidence bands for functional
	parameters.
	\newblock {\em arXiv preprint arXiv:1910.00131}, 2019.
	
	\bibitem{locatelli1}
	L~Locatelli, S.~Perotto, and F.~Clerici.
	\newblock Metodo implementato mediante computer per la rimappatura di una
	texture di un oggetto grafico tridimensionale, Italian patent No.
	102021000018920. Filed on July 16, 2021.
	
	\bibitem{olsen2021}
	Niels~Lundtorp Olsen, Alessia Pini, and Simone Vantini.
	\newblock False discovery rate for functional data.
	\newblock {\em Test}, pages 1--26, 2021.
	
	\bibitem{pataky2021}
	Todd~Colin Pataky, Konrad Abramowicz, Dominik Liebl, Alessia Pini,
	Sara~Sj{\"o}stedt de~Luna, and Lina Schelin.
	\newblock Simultaneous inference for functional data in sports biomechanics.
	\newblock {\em AStA Advances in Statistical Analysis}, pages 1--24, 2021.
	
	\bibitem{perotto1}
	S.~Perotto, L.~Locatelli, M.~Carbonara, and F.~Clerici.
	\newblock Metodo implementato mediante computer per la semplificazione di una
	mesh di un oggetto grafico tridimensionale, Italian patent No.
	102022000001328. Filed on January 01, 2022.
	
	\bibitem{pesarin}
	Fortunato Pesarin and Luigi Salmaso.
	\newblock {\em Permutation tests for complex data}.
	\newblock Wiley, 2010.
	
	\bibitem{pv2017interval}
	Alessia Pini and Simone Vantini.
	\newblock Interval-wise testing for functional data.
	\newblock {\em Journal of Nonparametric Statistics}, 29(2):407--424, 2017.
	
	\bibitem{winkler2014}
	Anderson~M Winkler, Gerard~R Ridgway, Matthew~A Webster, Stephen~M Smith, and
	Thomas~E Nichols.
	\newblock Permutation inference for the general linear model.
	\newblock {\em Neuroimage}, 92:381--397, 2014.
	
\end{thebibliography}

\end{document}